\newtheorem{theorem}{Theorem}[section]
\newtheorem{lemma}[theorem]{Lemma}
\newtheorem{proposition}[theorem]{Proposition}
\newtheorem{corollary}[theorem]{Corollary}
\newtheorem{definition}{Definition}
\theoremstyle{definition}
\newcommand{\ignore}[1]{[{\tiny *ignored part*}]}
\renewcommand{\Pr}{\mathop{\bf Pr\/}}
\newcommand{\E}{\mathop{\bf E\/}}
\newcommand{\R}{\mathbb R}
\newcommand{\N}{\mathbb N}
\newcommand{\eps}{\epsilon}
\newcommand{\dist}{\mathrm{dist}}
\newcommand{\la}{\langle}
\newcommand{\ra}{\rangle}
\newcommand{\wh}{\widehat}
\newcommand{\wt}{\widetilde}
\newcommand{\calH}{{\cal H}}
\newcommand{\bx}{\boldsymbol{x}}
\newcommand{\by}{\boldsymbol{y}}
\newcommand{\bh}{\boldsymbol{h}}
\newcommand{\Stab}{\mathbb{S}}
\newcommand{\K}{\mathbb{K}}
\begin{document}

\title{Optimal lower bounds for locality sensitive hashing \\ (except when \emph{q} is tiny)}

\author{ Ryan O'Donnell\thanks{Supported by NSF grants CCF-0747250 and CCF-0915893, BSF grant 2008477, and Sloan and Okawa fellowships.} \qquad Yi Wu \qquad Yuan Zhou\\Computer Science Department\\Carnegie Mellon University\\\{odonnell,yiwu,yuanzhou\}@cs.cmu.edu}

\date{}

\maketitle

\abstract
We study lower bounds for Locality Sensitive Hashing (LSH) in the strongest setting: point sets in $\{0,1\}^d$ under the Hamming distance.  Recall that $\calH$ is said to be an $(r, cr, p, q)$-sensitive hash family if all pairs $x,y \in \{0,1\}^d$ with $\dist(x,y) \leq r$ have probability at least $p$ of collision under a randomly chosen $h \in \calH$, whereas all pairs $x,y \in \{0,1\}^d$ with $\dist(x,y) \geq cr$ have probability at most $q$ of collision.  Typically, one considers $d \to \infty$, with $c > 1$ fixed and $q$ bounded away from $0$. 

For its applications to approximate nearest neighbor search in high dimensions, the quality of an LSH family $\calH$ is governed by how small its ``rho parameter'' $\rho = \ln(1/p)/\ln(1/q)$ is as a function of the parameter~$c$. The seminal paper of Indyk and Motwani showed that for each $c \geq 1$, the extremely simple family $\calH = \{x \mapsto x_i : i \in d\}$ achieves $\rho \leq 1/c$.  The only known lower bound, due to Motwani, Naor, and Panigrahy, is that $\rho$ must be at least $(e^{1/c}-1)/(e^{1/c}+1)\geq .46/c$ (minus $o_d(1)$).

In this paper we show an optimal lower bound: $\rho$ must be at least $1/c$ (minus $o_d(1)$).  This lower bound for Hamming space yields a lower bound of $1/c^2$ for Euclidean space (or the unit sphere) and $1/c$ for the Jaccard distance on sets; both of these match known upper bounds.  Our proof is simple; the essence is that the noise stability of a boolean function at $e^{-t}$ is a log-convex function of $t$.  

Like the Motwani--Naor--Panigrahy lower bound, our proof relies on the assumption that $q$ is not ``tiny'', meaning of the form $2^{-\Theta(d)}$. Some lower bound on $q$ is always necessary, as otherwise it is trivial to achieve $\rho = 0$. The range of $q$ for which our lower bound holds is the same as the range of $q$ for which $\rho$ accurately reflects an LSH family's quality.  Still, we conclude by discussing why it would be more satisfying to find LSH lower bounds that hold for tiny $q$.

\setcounter{page}{0} \thispagestyle{empty}
\newpage

\section{Locality Sensitive Hashing} \label{sec:lsh}
Locality Sensitive Hashing (LSH) is a widely-used algorithmic tool which brings the classic technique of hashing to geometric settings.  It was
introduced for general metric spaces in the seminal work of Indyk and Motwani~\cite{IM98}. Indyk and Motwani showed that the important problem of (approximate) nearest neighbor search can be reduced to the problem of devising good LSH families.  Subsequently, numerous papers demonstrating the practical utility of solving high-dimensional nearest neighbor search problems via the LSH approach~\cite{GIM99,Buh01,CDFGIMUY01,SVD03,RPH05,DDGR07}.  For a survey on LSH, see Andoni and Indyk~\cite{AI08}.\\

We recall the basic definition from~\cite{IM98}:
\begin{definition}
Let $(X, \dist)$ be a distance space\footnote{A metric space where the triangle inequality need not hold.}, and let $U$ be any finite or countably infinite set.  Let $r > 0$, $c > 1$.  A probability distribution $\calH$ over functions $h : X \to U$  is \emph{$(r, cr, p, q)$-sensitive} if for all $x, y \in X$,
\begin{eqnarray*}
\dist(x,y) \leq r &\Rightarrow& \Pr_{\bh \sim \calH}[\bh(x) = \bh(y)] \geq p,\\
\dist(x,y) \geq cr &\Rightarrow&  \Pr_{\bh \sim \calH}[\bh(x) = \bh(y)] \leq q,
\end{eqnarray*}
where $q < p$. We often refer to $\calH$ as a \emph{locally sensitive hash (LSH) family} for $(X,\dist)$.
\end{definition}

As mentioned, the most useful application of LSH is to the \emph{approximate near neighbor problem} in high dimensions:
\begin{definition} For a set of $n$ points $P$ in a metric space $(X, \dist)$, the \emph{$(r,c)$-near neighbor problem} is to process the points into a data structure that supports the following type of query: given a point $x \in X$, if there exists $y \in P$ with $\dist(x,y) \leq r$, the data structure should return a point $z \in P$ such that $\dist(x,z) \leq cr$.
\end{definition}
Several important problems in computational geometry reduce to the approximate near neighbor problem, including approximate versions of nearest neighbor, furthest neighbor, close pair, minimum spanning tree, and facility location.  For a short survey of these topics, see Indyk~\cite{Ind04}.\\

Regarding the reduction from $(r,c)$-near neighbor problem to LSH, it is usual (see~\cite{Ind01-thesis,DIIM04}) to credit roughly the following theorem to~\cite{IM98,GIM99}:
\begin{theorem} \label{thm:lsh-to-nn} Suppose $\calH$ is an $(r,cr,p,q)$-sensitive family for the metric space $(X,\dist)$.  Then one can solve the $(r,c)$-near neighbor problem with a (randomized) data structure that uses $O(n^{1+\rho} + dn)$ space and has query time dominated by $O(n^{\rho} \log_{1/q}(n))$ hash function evaluations.  (The preprocessing time is not much more than the space bound.)
\end{theorem}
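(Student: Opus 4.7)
The plan is the standard AND/OR amplification construction. The first step is to boost the gap between $p$ and $q$ by defining an amplified family $\calG$ whose functions are tuples $g(x) = (\bh_1(x), \ldots, \bh_k(x))$ with each $\bh_i$ drawn independently from $\calH$. Setting $k = \lceil \log_{1/q} n \rceil$, a close pair (distance $\leq r$) collides under $g$ with probability at least $p^k$, while a far pair (distance $\geq cr$) collides with probability at most $q^k \leq 1/n$. A short calculation shows $p^k \geq n^{-\rho}/q$, so we retain a roughly $n^{-\rho}$ collision probability for close pairs.

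The second step is the data structure: build $L = \Theta(n^\rho)$ independently drawn hash tables $T_1, \ldots, T_L$ using functions $g_1, \ldots, g_L$ from $\calG$, storing each of the $n$ input points (by reference) in each table. The space is $O(Ln + dn) = O(n^{1+\rho} + dn)$ as desired, and the preprocessing is of the same order (ignoring hash-evaluation cost, which is subsumed). A query on $x$ evaluates $g_j(x)$ for each $j$, uses a total of $Lk = O(n^\rho \log_{1/q} n)$ hash evaluations, scans the points in the $L$ corresponding buckets, and returns the first one within distance $cr$; we cap the total number of points scanned at $3L$ to control worst-case time.

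The third step is the correctness and time analysis. Suppose some $y \in P$ satisfies $\dist(x,y) \leq r$. The probability that $y$ fails to collide with $x$ in any table is at most $(1 - p^k)^L \leq \exp(-L \cdot p^k)$, which is bounded above by a constant less than $1$ once $L$ is chosen as a sufficiently large constant times $n^\rho$. Meanwhile, the expected number of collisions in the $L$ buckets with points $z$ at distance $> cr$ is at most $L \cdot n \cdot q^k \leq L$, so by Markov's inequality the cap of $3L$ is exceeded with probability at most $1/3$. A union bound over the two failure modes gives overall success probability bounded away from $0$, which can then be amplified by independent repetition; distance checks contribute $O(d \cdot n^\rho)$ to the query time, already subsumed by the stated bound once one notes hash evaluations dominate.

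No step here is a real obstacle; the only mildly delicate point is choosing the constant multiplier in $L$ and the cap so that both failure probabilities are small simultaneously, and verifying that the $q < p$ hypothesis makes $\rho \in (0,1)$ well-defined so that $n^\rho$ is a meaningful savings. Everything else is routine probabilistic bookkeeping on the amplified family $\calG$.
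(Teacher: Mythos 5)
Your construction (powering to $k = \lceil \log_{1/q} n\rceil$, then $L$ independent hash tables) is exactly the Indyk--Motwani reduction that the paper cites for this theorem and sketches in its Section~\ref{sec:discussion}, so the approach is the right one. But there is a concrete error at the one delicate point. The inequality ``$p^k \ge n^{-\rho}/q$'' is false, and in the wrong direction: since $p<1$ and $k \ge \log_{1/q} n$, we have $p^k \le n^{-\rho}$, whereas $n^{-\rho}/q > n^{-\rho}$. The correct consequence of rounding $k$ up is $p^k \ge p^{\log_{1/q}(n)+1} = p\cdot n^{-\rho} = (n/q)^{-\rho}$ (using $q^{\rho}=p$). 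Consequently the number of tables needed to make $(1-p^k)^L$ a small constant is $L = \Theta(1/p^k) = \Theta\bigl((n/q)^{\rho}\bigr)$, not $\Theta(n^{\rho})$; your claim that a sufficiently large constant times $n^{\rho}$ suffices for $L$ holds only if $p$ (equivalently $q$) is bounded away from $0$.

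This is not a pedantic quibble: it is precisely the hidden hypothesis of the theorem. The paper states explicitly that Theorem~\ref{thm:lsh-to-nn} implicitly assumes $q$ is bounded away from $0$ and does not hold for subconstant $q$, and its Theorem~\ref{thm:lsh-to-nn2} is exactly the corrected unconditional statement with $(n/q)^{\rho}$ replacing $n^{\rho}$ (e.g.\ $p=n^{-.15}$, $q=n^{-.3}$ gives $\rho=1/2$ but only an $O(n^{1.6})$-space structure). To repair your write-up, either add the hypothesis $q=\Omega(1)$ --- under which everything else you wrote (the Markov bound on far collisions, the cap on scanned points, the union bound) goes through --- or weaken the conclusion to the $(n/q)^{\rho}$ form. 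The rest of the probabilistic bookkeeping is fine.
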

Here we are using the following:
\begin{definition}  The \emph{rho parameter} of an $(r,cr,p,q)$-sensitive LSH family $\calH$ is
\[
\rho = \rho(\calH) = \frac{\ln(1/p)}{\ln(1/q)} \in (0,1).
\]
\end{definition}
Please note that in Theorem~\ref{thm:lsh-to-nn}, it is implicitly assumed~\cite{Ind09} that $q$ is bounded away from~$0$.  For ``subconstant'' values of $q$, the theorem does not hold. This point is discussed further in Section~\ref{sec:discussion}.\\

Because of Theorem~\ref{thm:lsh-to-nn}, there has been significant interest~\cite{DIIM04, TT07, AI08, Ney10} in determining the smallest possible $\rho$ that can be obtained for a given metric space and value of~$c$.  Constant factors are important here, especially for the most natural regime of $c$ close to~$1$.  For example, shrinking $\rho$ by an additive $.5$ leads to time and space savings of $\Theta(\sqrt{n})$.

\section{Previous work}
\subsection{Upper bounds}
The original work of Indyk and Motwani~\cite{IM98} contains the following simple yet strong result:
\begin{theorem} \label{thm:im} There is an LSH family $\calH$ for $\{0,1\}^d$ under the Hamming distance which for each $c > 1$ has rho parameter
\[
\rho(\calH) \leq \frac{1}{c},
\]
simultaneously for each $r < d/c$.
\end{theorem}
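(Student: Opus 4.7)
The plan is to exhibit a single family $\calH$ that works for all $c > 1$ and all $r < d/c$ simultaneously, namely the ``coordinate projection'' family mentioned in the abstract:
\[
\calH = \{h_i : i \in [d]\}, \qquad h_i(x) = x_i,
\]
with the hash function chosen uniformly at random. This family does not even depend on $r$ or $c$, which is what gives the ``simultaneously'' clause.

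The key calculation is that for any $x,y \in \{0,1\}^d$ with Hamming distance $k$, the collision probability is
\[
\Pr_{\bh \sim \calH}[\bh(x) = \bh(y)] = 1 - k/d.
\]
Consequently, for pairs at distance $\le r$ we get collision probability at least $p := 1 - r/d$, and for pairs at distance $\ge cr$ we get collision probability at most $q := 1 - cr/d$. Both quantities are positive thanks to the hypothesis $r < d/c$.

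It now remains to verify that $\rho = \ln(1/p)/\ln(1/q) \le 1/c$, or equivalently $q \le p^{c}$. Writing $t = r/d \in (0, 1/c)$, this amounts to the inequality
\[
1 - ct \;\le\; (1-t)^{c},
\]
which is precisely Bernoulli's inequality for exponent $c \ge 1$ and $t \in [0,1]$. This is the entire content of the proof; there is no real obstacle, only this clean one-line calculation.
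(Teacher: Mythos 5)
Your proposal is correct and matches the paper's proof: the same uniform family over the coordinate projections $h_i(x)=x_i$, with the same sensitivity parameters $p=1-r/d$ and $q=1-cr/d$. The only difference is cosmetic --- the paper simply asserts that $\ln(1/p)/\ln(1/q)$ increases to $1/c$ as $r/d\to 0$, whereas you verify the bound $q\le p^c$ directly via Bernoulli's inequality, which is a clean way to supply the detail the paper omits.
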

In this theorem, the family is simply the uniform distribution over the $d$ functions $h_i(x) = x_i$.  For a given $c$ and $r$, this family is obviously $(r, cr, 1 - r/d, 1 - cr/d)$-sensitive, whence
\[
\rho(\calH) = \frac{\ln(1/(1 - r/d))}{\ln(1/(1-cr/d))} \nearrow \frac{1}{c} \quad \text{as $r/d \to 0$}.
\]
We remark that the upper bound of $1/c$ in Theorem~\ref{thm:im} becomes tight only for asymptotically small $r/d$.  Indyk and Motwani showed that the same bound holds for the closely related ``Jaccard metric'' (see~\cite{IM98}), and also extended Theorem~\ref{thm:im} to an LSH family for the metric space $\ell_1$ (see also~\cite{AI06}). \\

Perhaps the most natural setting is when the metric space is the usual $d$-dimensional Euclidean space $\ell_2^d$.  Here, Andoni and Indyk~\cite{AI08} showed, roughly speaking, that $\rho \leq 1/c^2$:
\begin{theorem} \label{thm:ai} For any $r > 0$, $c > 1$, $d \geq 1$, there is a sequence of LSH families $\calH_t$ for $\ell_2^d$ satisfying
\[
\limsup_{t \to \infty} \rho(\calH_t) \leq \frac{1}{c^2}.
\]
(The complexity of evaluating a hash function $\bh \sim \calH_t$ also increases as $t$ increases.)
\end{theorem}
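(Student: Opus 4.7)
My plan is to construct $\calH_t$ by first lifting $\R^d$ to $\R^t$ via a random Gaussian linear map, and then applying a ``ball carving'' random partition of $\R^t$ at a carefully chosen scale $w$. The point of taking $t$ large is that in $\R^t$, under a ball partition of radius $\Theta(\sqrt{t})$, the probability that two Gaussian points at distance $s$ collide can be shown to satisfy $\log(1/\Pr[\text{collide}]) = \alpha \cdot s^2 + o(1)$ as $t \to \infty$, for an explicit $\alpha = \alpha(w/\sqrt{t}) > 0$. Because the exponent is \emph{quadratic} in $s$ (rather than linear, as it would be for a hyperplane or 1-D grid partition), the ratio $\rho = \log(1/p)/\log(1/q)$ tends to $r^2/(cr)^2 = 1/c^2$ in the limit.

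\emph{Construction.} Fix $t$ and $w > 0$. Let $\bG \in \R^{t \times d}$ have i.i.d.\ standard Gaussian entries and set $\Phi(x) = \bG x / \sqrt{t}$, so that $\Phi(x) - \Phi(y) \sim \calN(0, (\|x-y\|_2^2/t) \cdot I_t)$. Sample a random sequence of ball centers $\bz_1, \bz_2, \ldots \in \R^t$ (via a uniform grid shift together with a random ordering of grid points, say, or via a Poisson point process of appropriate density), and define $\bh(x)$ to be the smallest index $j$ for which $\|\Phi(x) - \bz_j\| \leq w$. Choosing the grid density high enough relative to $w$ ensures that a valid $j$ exists for every $\Phi(x)$ with overwhelming probability; any residual failure can be absorbed into a single disjoint ``trash'' bucket, which adds only $o(1)$ to $\rho$.

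\emph{Key collision calculation.} For $x, y$ at $\ell_2^d$-distance $s$, translation invariance of both $\bG$ and the center process reduces $\Pr[\bh(x)=\bh(y)]$ to a Gaussian integral $I(s,w,t)$ over $\R^t$ depending only on the law of $(\Phi(x),\Phi(y))$ relative to the center pattern. Using rotational symmetry of the Gaussian together with the ``first ball that contains the point'' recursion, one rewrites $I$ as essentially a one-dimensional integral and then estimates it via Laplace's method. With $w = \beta \sqrt{t}$ for $\beta$ fixed, this yields $-\log I(s,w,t) = \alpha(\beta) s^2 + o_t(1)$, uniformly in bounded~$s$. Setting $p = I(r,w,t)$ and $q = I(cr,w,t)$ therefore gives
\[
\rho(\calH_t) = \frac{\log(1/p)}{\log(1/q)} \longrightarrow \frac{\alpha(\beta)\, r^2}{\alpha(\beta)\, (cr)^2} = \frac{1}{c^2},
\]
and $\beta$ can be optimized separately (independently of $c$) to keep $q$, and hence the per-query hash complexity, from being too small.

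The main obstacle is the key calculation. Setting up the partition so that $I(s,w,t)$ admits a clean asymptotic expansion is delicate: one needs the collision probability to factor nicely across ball indices (hence the ``first ball'' ordering), and the Laplace-method estimate must be uniform across $s \in [r,cr]$ and robust to the choice of center process. The conceptual point underlying the $1/c^2$ (rather than $1/c$) bound is that balls of radius $\Theta(\sqrt{t})$ in $\R^t$ look less and less like halfspaces as $t$ grows, and this ``curvature'' is exactly what upgrades the linear $s$-dependence of the one-dimensional grid scheme into the quadratic $s^2$-dependence that delivers $\rho \to 1/c^2$.
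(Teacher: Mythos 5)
This theorem is not proved in the paper at all: it is quoted as background from Andoni and Indyk~\cite{AI08}, so there is no in-paper argument to compare against. Your plan is, in outline, the same as the original Andoni--Indyk construction: a Gaussian dimension reduction to $\R^t$ followed by a ``ball carving'' partition at radius $w=\Theta(\sqrt t)$, with the collision probability governed by cap/intersection volumes of high-dimensional balls, whose logarithm decays quadratically in the distance $s$ (this is the concentration-of-measure phenomenon that a width-$s$ cap of a radius-$w$ ball in $\R^t$ has relative measure $\exp(-\Theta(ts^2/w^2))$), yielding $\rho\to 1/c^2$. Two caveats. First, the step you call the ``key collision calculation'' is the entire technical content of the theorem; asserting $-\log I(s,w,t)=\alpha(\beta)s^2+o_t(1)$ ``by Laplace's method'' is a statement of the goal, not a proof, and making it uniform over $s\in[r,cr]$ while also controlling the Johnson--Lindenstrauss distortion of $\|\Phi(x)-\Phi(y)\|$ (which fluctuates by a relative $\Theta(1/\sqrt t)$ and enters the exponent) is where all the work lies. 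Second, the ``first ball containing the point'' scheme needs an explicit, enumerable family of centers that covers $\R^t$; in~\cite{AI08} this forces $2^{\Theta(t\log t)}$ shifted grids and is the source of the growing hash-evaluation complexity mentioned parenthetically in the statement, so it cannot be waved into a ``trash bucket'' without checking that the uncovered event is rare compared to $q$. Neither issue is a wrong turn --- both are resolved in~\cite{AI08} --- but as written your proposal is a correct roadmap rather than a proof.
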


For other $\ell_s$ distance/metric spaces, Datar, Immorlica, Indyk, and Mirrokni~\cite{DIIM04} have similarly shown:\footnote{Please note that in~\cite{Pan06,MNP07} it is stated that~\cite{DIIM04} also improves the Indyk--Motwani $1/c$ upper bound for $\ell_1$ when $c \leq 10$.  However this is in error.}
\begin{theorem} \label{thm:diim} For any $r > 0$, $c > 1$, $d \geq 1$, and $0 < s < 2$, there is a sequence of LSH families $\calH_t$ for $\ell_s^d$ satisfying
\[
\limsup_{t \to \infty} \rho(\calH_t) \leq \max\left\{\frac{1}{c^s}, \frac{1}{c}\right\}.
\]
\end{theorem}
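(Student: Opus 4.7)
The plan is to exhibit the Datar--Immorlica--Indyk--Mirrokni family of hash functions based on $s$-stable distributions and analyze its rho parameter in the large-window limit. Recall that for each $0 < s \le 2$ there is an $s$-stable probability distribution $\calD_s$ on $\R$ with a density $f_s$: for i.i.d.\ $\bX_1,\ldots,\bX_d \sim \calD_s$ and constants $a_1,\ldots,a_d \in \R$, the sum $\sum_i a_i \bX_i$ is distributed as $\bigl(\sum_i |a_i|^s\bigr)^{1/s}\bX$ with $\bX \sim \calD_s$. The density $f_s$ is symmetric, positive, and (for $s < 2$) has the heavy tail $f_s(x) \sim C_s |x|^{-s-1}$ as $|x|\to\infty$.

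For a window width $w > 0$, let $\calH_w$ consist of the hash functions $h_{\boldsymbol{a},\boldsymbol{b}}(v) = \lfloor (\boldsymbol{a}\cdot v + \boldsymbol{b})/w\rfloor$, where $\boldsymbol{a}\in\R^d$ has i.i.d.\ $\calD_s$ coordinates and $\boldsymbol{b}\in[0,w)$ is drawn uniformly. By stability, for $v_1,v_2\in\R^d$ with $\ell = \|v_1-v_2\|_s$, the random variable $\boldsymbol{a}\cdot(v_1-v_2)$ is distributed as $\ell\bX$. Conditioning on $\boldsymbol{a}$ and using uniformity of $\boldsymbol{b}$, the conditional collision probability is exactly $\max(1 - |\boldsymbol{a}\cdot(v_1-v_2)|/w,\,0)$, hence
\[
\Pr\bigl[h_{\boldsymbol{a},\boldsymbol{b}}(v_1) = h_{\boldsymbol{a},\boldsymbol{b}}(v_2)\bigr] \;=\; \E\bigl[\max(1 - \ell|\bX|/w,\,0)\bigr] \;=\; F_s(w/\ell),
\]
where $F_s(T) := 2\int_0^T f_s(x)(1 - x/T)\,dx$. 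Setting $\calH_t := \calH_w$ with $w = t$, we have $\rho(\calH_t) = \ln(1/F_s(t/r))/\ln(1/F_s(t/(cr)))$, reducing the problem to studying $F_s(T)$ as $T\to\infty$.

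Write $1 - F_s(T) = \Pr[|\bX|>T] + T^{-1}\E[|\bX|\cdot\mathbf{1}_{|\bX|\le T}]$ and invoke the tail estimate $f_s(x)\sim C_s|x|^{-s-1}$. This yields three regimes: for $s>1$, $\E|\bX|$ is finite and $1 - F_s(T) \sim \E|\bX|/T$; for $s<1$, both terms scale as $T^{-s}$ and $1 - F_s(T) = \Theta(T^{-s})$; and for $s=1$ (Cauchy), $1 - F_s(T) \sim (2/\pi)(\log T)/T$. Taking $-\ln$ and forming the ratio of numerator and denominator at arguments $T$ and $T/c$, the leading constants cancel and yield $\rho(\calH_t) \to 1/c$ whenever $s \ge 1$ (including $s=1$, where $\log T / \log(T/c) \to 1$) and $\rho(\calH_t) \to T^{-s}/(T/c)^{-s} = 1/c^s$ when $s < 1$. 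Since $c > 1$, this matches $\max\{1/c^s,\,1/c\}$ in every case.

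The principal technical obstacle is the tail analysis of the $s$-stable density, for which one needs the asymptotic $f_s(x)\sim C_s|x|^{-s-1}$. This follows from the characteristic function $\exp(-|\xi|^s)$ by an Abelian/Tauberian argument, or it may simply be quoted from the stable-law literature. The boundary $s=1$ is the most delicate case: both $\Pr[|\bX|>T]$ and the truncated mean $T^{-1}\E[|\bX|\mathbf{1}_{|\bX|\le T}]$ contribute at order $(\log T)/T$, but this shared logarithm cancels when the ratio of logarithms is taken, so one still lands on $1/c$, in agreement with $\max\{1/c^s,\,1/c\} = 1/c$ at $s = 1$.
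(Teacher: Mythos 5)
The paper does not prove this theorem; it is quoted as prior work from Datar--Immorlica--Indyk--Mirrokni \cite{DIIM04}, and your proposal is essentially a correct reconstruction of their argument: the $s$-stable random-projection-plus-rounding family, the exact collision probability $F_s(w/\ell)$ via the uniform offset, and the $w\to\infty$ asymptotics of $1-F_s(T)$ split into the three regimes $s>1$, $s=1$, $s<1$ (giving $1/c$, $1/c$, and $1/c^s$ respectively, i.e.\ $\max\{1/c^s,1/c\}$ in all cases, using $-\ln F_s(T)\sim 1-F_s(T)$). The only ingredients you import without proof are the standard stable-law facts (existence of a symmetric density and the tail asymptotic $f_s(x)\sim C_s|x|^{-s-1}$ for $s<2$), which you correctly identify and which are legitimately quotable from the literature.
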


\noindent Other practical LSH families have been suggested for the Euclidean sphere~\cite{TT07} and $\ell_2$~\cite{Ney10}.

\subsection{Lower bounds}
There is one known result on lower bounds for LSH, due to Motwani, Naor, and Panigrahy~\cite{MNP07}:
\begin{theorem} \label{thm:mnp} Fix $c > 1$, $0 < q < 1$, and consider $d \to \infty$.  Then there exists some $r = r(d)$ such that for any LSH family $\calH$ for $\{0,1\}^d$ under Hamming distance which is $(r, cr, p, q)$-sensitive must satisfy
\[
\rho(\calH) \geq \frac{\exp(1/c)-1}{\exp(1/c) + 1} - o_{d}(1).
\]
\end{theorem}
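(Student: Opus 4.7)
The plan is to run a noise-stability argument whose entire analytic content is log-convexity of the moment generating function of a nonnegative measure; this will in fact yield the stronger bound $\rho \geq 1/c - o_d(1)$, which implies the stated inequality since $\tanh(x) < x$ for $x > 0$. First I would symmetrize: given an $(r,cr,p,q)$-sensitive family $\calH$, form $\calH'$ by composing with a uniformly random translation $\bz \in \{0,1\}^d$ and a uniformly random coordinate permutation $\boldsymbol{\sigma}$, setting $\bh'(x) = \bh(\boldsymbol{\sigma}(x) \oplus \bz)$. The new family is still $(r,cr,p,q)$-sensitive, and now $\Pr_{\bh'}[\bh'(x) = \bh'(y)]$ depends only on $k = \dist(x,y)$; call it $p(k)$, so that $p(k) \geq p$ for $k \leq r$ and $p(k) \leq q$ for $k \geq cr$.

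Next, for $\eta \in [0,1/2]$ with correlation $\lambda = 1 - 2\eta$, I would consider
\[
F(\eta) = \sum_{k=0}^{d} \binom{d}{k} \eta^k (1-\eta)^{d-k}\, p(k),
\]
the expected collision probability when $\by$ is the $\eta$-noisy copy of uniform $\bx$. Expanding each bucket indicator $\mathbf{1}[\bh'(\cdot)=u]$ in the Walsh--Fourier basis and averaging over $\bh'$ gives $F(\eta) = \sum_S \lambda^{|S|} A_S$ with $A_S \geq 0$, so $F(e^{-t}) = \sum_{k=0}^d B_k e^{-tk}$ with $B_k = \sum_{|S|=k} A_S \geq 0$. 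This is the Laplace transform of a nonnegative measure on $\{0,1,\dots,d\}$, so $\ln F(e^{-t})$ is convex in $t$; concretely, $F''(t)F(t) - F'(t)^2 \geq 0$ by one application of Cauchy--Schwarz on the vectors $(k\sqrt{B_k e^{-tk}})_k$ and $(\sqrt{B_k e^{-tk}})_k$.

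The quantitative step is to pick the two noise levels and invoke binomial concentration. Choose $r = r(d)$ with $r \to \infty$ and $r/d \to 0$ (e.g.\ $r = \lfloor\sqrt{d}\rfloor$), and set $\eta_1 = r/d - d^{-1/3}$, $\eta_2 = cr/d + d^{-1/3}$. Chernoff bounds give $F(\eta_1) \geq p\,(1-o_d(1))$, since the binomial mass above $r$ is negligible and $p(k) \geq 0$ elsewhere, and $F(\eta_2) \leq q + o_d(1)$, since the mass below $cr$ is negligible and $p(k) \leq 1$ elsewhere. Setting $g(t) = \ln F(e^{-t})$ and $t_i = -\ln(1-2\eta_i) \sim 2\eta_i$, so that $t_1/t_2 \to 1/c$, the convexity of $g$ with $g(0)=0$ and $g \leq 0$ forces $|g(t)|/t$ to be non-increasing in $t$, giving $\ln(1/F(\eta_1)) \geq (t_1/t_2)\ln(1/F(\eta_2))$. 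Substituting the bounds on $F(\eta_i)$ and dividing by $\ln(1/q)$ produces $\rho = \ln(1/p)/\ln(1/q) \geq (t_1/t_2) - o_d(1) = 1/c - o_d(1)$.

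The only place where the argument strains is converting $F(\eta_2) \leq q + o_d(1)$ into $\ln(1/F(\eta_2)) \geq \ln(1/q) - o_d(1)$: this conversion requires $q$ to be bounded away from $0$ so that the additive $o_d(1)$ error is small compared to $\ln(1/q)$. This matches the theorem's hypothesis that $q \in (0,1)$ is fixed, and it is exactly the ``tiny $q$'' caveat flagged in the abstract; every other step (binomial tail bounds, Cauchy--Schwarz, and linearization of $-\ln(1-2\eta) \sim 2\eta$) is routine.
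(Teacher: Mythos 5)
Your argument is, in substance, the paper's own: Theorem~\ref{thm:mnp} is merely quoted from Motwani--Naor--Panigrahy, and the paper's actual contribution (Theorem~\ref{thm:main}) is precisely the stronger bound $\rho \geq 1/c - o_d(1)$, proved exactly as you propose --- $\K_\calH(t) = \sum_S \E[\|\wh{\bh}(S)\|_2^2]e^{-t|S|}$ is log-convex in $t$ as a nonnegative combination of exponentials (this is Corollary~\ref{cor:log-conc}), binomial concentration transfers the two sensitivity guarantees to the two noise levels, and the additive Chernoff error is absorbed because $q$ is a fixed constant. Your reduction to the stated inequality is also correct, since $(e^{1/c}-1)/(e^{1/c}+1) = \tanh(1/(2c)) < 1/(2c) < 1/c$. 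The one concrete defect is your parameter calibration: with $r = \lfloor\sqrt{d}\rfloor$ the slack $d^{-1/3}$ exceeds $r/d = d^{-1/2}$, so $\eta_1 = r/d - d^{-1/3}$ is negative, and even ignoring the sign the ratio $t_1/t_2 \sim \eta_1/\eta_2$ would not tend to $1/c$ because the slack dominates the mean $r/d$ rather than being negligible against it. You need a slack $s$ with $\sqrt{r}/d \ll s \ll r/d$, so that it beats the binomial standard deviation $\approx \sqrt{\eta_i d}$ while remaining $o(\eta_i)$; e.g.\ take $r = \lceil d^{3/4}\rceil$ and keep slack $d^{-1/3}$, or keep $r = \lfloor\sqrt{d}\rfloor$ and use slack $d^{-5/8}$. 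With that repair the proof goes through. (Your symmetrization step is harmless but unnecessary: the paper works directly with $\Stab_{\calH}$, averaging over $\bh$, without requiring the collision probability to depend only on $\dist(x,y)$.)
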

The metric setting of $\{0,1\}^d$ under Hamming distance is the most powerful setting for lower bounds; as Motwani, Naor, and Panigrahy note, one can immediately deduce a lower bound of
\[
\frac{\exp(1/c^s)-1}{\exp(1/c^s) + 1} - o_{d}(1)
\]
for the setting of $\ell_s^d$.  This is simply because $\|x - y\|_s = \|x - y\|_1^{1/s}$ when $x, y \in \{0,1\}^d$.\\

As $c \to \infty$, the lower bound in Theorem~\ref{thm:mnp} approaches $\frac{1}{2c}$.  This is a factor of $2$ away from the upper bound of Indyk and Motwani.  The gap is slightly larger in the more natural regime of $c$ close to $1$; here one only has that $\rho(\calH) \geq \frac{e-1}{e+1} \frac{1}{c} \approx \frac{.46}{c}$.\\

Note that in Theorem~\ref{thm:mnp}, the parameter $q$ is fixed \emph{before} one lets $d$ tend to $\infty$; i.e., $q$ is assumed to be at least a ``constant''.  Even though this is the same assumption implicitly made in the application of LSH to near-neighbors (Theorem~\ref{thm:lsh-to-nn}), we feel it is not completely satisfactory.  In fact, as stated in~\cite{MNP07}, Theorem~\ref{thm:mnp} still holds so long as $q \geq 2^{-o(d)}$.  Our new lower bound for LSH also holds for this range of $q$.  But we believe the most satisfactory lower bound would hold even for ``tiny'' $q$, meaning $q = 2^{-\Theta(d)}$.  This point is discussed further in Section~\ref{sec:discussion}.\\

We close by mentioning the recent work of Panigrahy, Talwar, and Wieder~\cite{PTW08} which obtains a time/space lower bound for the $(r,c)$-near neighbor problem itself in several metric space settings, including $\{0,1\}^d$ under Hamming distance, and $\ell_2$.

\section{Our result}
In this work, we improve on Theorem~\ref{thm:mnp} by obtaining a sharp lower bound of $\frac{1}{c} - o_d(1)$ for every $c > 1$.  This dependence on $c$ is optimal, by the upper bound of Indyk and Motwani. The precise statement of our result is as follows:
\begin{theorem} \label{thm:main}  Fix $d \in \N$, $1 < c < \infty$, and $0 < q < 1$.  Then for a certain choice of $0 < \tau < 1$, any $(\tau d, c \tau d, p, q)$-sensitive hash family $\calH$ for $\{0,1\}^d$ under Hamming distance must satisfy
\begin{eqnarray}
\rho(\calH) \geq \frac{1}{c} - \wt{O}\left(\frac{\ln(2/q)}{d}\right)^{1/3}. \label{eqn:main}
\end{eqnarray}
Here, the precise meaning of the $\wt{O}(\cdot)$ expression is
\[
K \cdot \frac{\ln(2/q)}{d} \cdot \ln\left(\frac{d}{\ln(2/q)}\right),
\]
where $K$ is a universal constant, and we assume $d/\ln(2/q) \geq 2$, say.
\end{theorem}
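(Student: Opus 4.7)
The plan is to isolate a single analytic quantity---the ``noisy collision probability''
\[
A(\eta) \;:=\; \Pr_{\bh \sim \calH,\ \bx \sim \{0,1\}^d,\ \by \sim N_\eta(\bx)}[\bh(\bx) = \bh(\by)], \qquad \eta \in [0,1/2],
\]
where $N_\eta(\bx)$ flips each coordinate of $\bx$ independently with probability $\eta$---and exploit a log-convexity property. First I would write, for each $h$ in the support of $\calH$ and each hash value $u$, $f_{h,u} : \{0,1\}^d \to \{0,1\}$ for the indicator of $h^{-1}(u)$, so that $\mathbf{1}[\bh(\bx) = \bh(\by)] = \sum_u f_{\bh,u}(\bx) f_{\bh,u}(\by)$. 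The standard Fourier formula for noise stability then gives
\[
A(\eta) \;=\; \sum_{S \subseteq [d]} c_S\,(1-2\eta)^{|S|}, \qquad c_S \;:=\; \E_{\bh}\Bigl[\sum_u \wh{f_{\bh,u}}(S)^2\Bigr] \;\geq\; 0,
\]
with $A(0) = 1$. Substituting $t := -\ln(1-2\eta)$ exhibits $A$ as a positive linear combination of exponentials $e^{-tk}$, $k = 0,1,\dots,d$, so H\"older's inequality makes $\log A$ convex in~$t$. Together with $\log A(0) = 0$, this yields the key inequality
\[
\ln(1/A(t_1)) \;\geq\; (t_1/t_2)\,\ln(1/A(t_2)) \quad \text{for } 0 \leq t_1 \leq t_2.
\]

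\smallskip

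The next step is to find $t_1 < t_2$ with $A(t_1) \geq p$, $A(t_2) \leq q$, and $t_1/t_2 \approx 1/c$, from which $\rho = \ln(1/p)/\ln(1/q) \geq t_1/t_2 \approx 1/c$ is immediate. To pin these down, I would decompose $A(\eta) = \sum_k \Pr_{\mathrm{Bin}(d,\eta)}[k]\cdot C(k)$, where $C(k)$ is the average collision probability among Hamming-distance-$k$ pairs; the LSH promise gives $C(k) \geq p$ for $k \leq r$ and $C(k) \leq q$ for $k \geq cr$. Setting $\eta_1 := \tau - \delta$ and $\eta_2 := c\tau + \delta$ (with $\tau := r/d$ and slack $\delta > 0$), the variance-sensitive Bernstein tail $\Pr[|\mathrm{Bin}(d,\eta) - \eta d| \geq \delta d] \leq \exp(-\Omega(\delta^2 d/\tau))$ yields $A(\eta_1) \geq p - \text{tail}$ and $A(\eta_2) \leq q + \text{tail}$.

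\smallskip

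Finally I would optimize $\tau$ and $\delta$. Picking $\delta = \Theta(\sqrt{\tau L/d})$ with $L := \ln(2/q) + \log d$ keeps the tail polynomially smaller than both $q$ and $1$, so that $\ln(1/A(\eta_1)) \leq \ln(1/p) + o(1)$ and $\ln(1/A(\eta_2)) \geq (1-o(1))\,\ln(1/q)$. Taylor expanding $-\ln(1-2\eta_i) = 2\eta_i(1 + O(\eta_i))$ gives
\[
\frac{t_1}{t_2} \;=\; \frac{\eta_1}{\eta_2}\bigl(1 - O(\tau)\bigr) \;=\; \frac{1}{c} \,-\, O(\delta/\tau) \,-\, O(\tau),
\]
so the total error is $O\bigl(\sqrt{L/(\tau d)} + \tau\bigr)$, minimized at $\tau = (L/d)^{1/3}$ to give the claimed $1/c - \wt{O}((\ln(2/q)/d)^{1/3})$ bound.

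\smallskip

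The hard part will be the balance in the tuning step: the binomial tails must be small enough compared to $q$ that $\ln(1/A(\eta_i))$ faithfully reflects $\ln(1/p)$ and $\ln(1/q)$ (forcing $\delta^2 d \gtrsim \tau \ln(1/q)$), yet $\delta$ must remain small enough that $t_1/t_2$ stays near $1/c$. Using the cruder Hoeffding bound (which ignores the variance $\tau(1-\tau) \ll 1$) would yield only $\delta \gtrsim \sqrt{\ln(1/q)/d}$ and a $1/4$-power error; it is precisely the variance-aware Bernstein estimate that turns the exponent into the sharp $1/3$.
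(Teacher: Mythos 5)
Your proposal is correct and matches the paper's argument essentially step for step: the same Fourier-based log-convexity of the noisy collision probability in the exponential time parameter (the paper's $\K_{\calH}(t)=\Stab_{\calH}(e^{-t})$), the same comparison to $p$ and $q$ via multiplicative Chernoff bounds on the binomial distance, and the same balancing of the relative slack against the tail $\exp(-\Omega(\delta^2 d/\tau))$ to land on the $(\ln(2/q)/d)^{1/3}$ error term. Your closing remark about needing the variance-sensitive rather than Hoeffding tail to get the exponent $1/3$ is exactly right and is reflected in the paper's use of $\exp(-\delta^2\eta d/(2+\delta))$.
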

As mentioned, the lower bound is only of the form $\frac{1}{c} - o_d(1)$ under the assumption that $q \geq 2^{-o(d)}$.  For $q$ of the form $2^{-d/B}$ for a large constant $B$, the bound~\eqref{eqn:main} still gives some useful information.\\

As with the Motwani--Naor--Panigrahy result, because our lower bound is for $\{0,1\}^d$  we may immediately conclude:
\begin{corollary} Theorem~\ref{thm:main} also holds for LSH families for the distance space $\ell_s$, $0 < s < \infty$, with the lower bound $1/c^s$ replacing $1/c$.
\end{corollary}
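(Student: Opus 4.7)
The plan is to prove the corollary by a simple reduction: any LSH family for $\ell_s^d$ automatically yields an LSH family for $\{0,1\}^d$ under Hamming distance, with related parameters. Then I will invoke Theorem~\ref{thm:main} to conclude.

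First I would observe the key arithmetic fact already hinted at in the excerpt: for every pair $x, y \in \{0,1\}^d$, each coordinate difference $|x_i - y_i|$ lies in $\{0,1\}$, so $|x_i - y_i|^s = |x_i - y_i|$, and therefore $\|x-y\|_s = \|x-y\|_1^{1/s} = \Ham(x,y)^{1/s}$. Next I would take an arbitrary $(R, cR, p, q)$-sensitive hash family $\calH$ for $\ell_s^d$ and simply restrict its domain to $\{0,1\}^d$. By the above identity, the condition $\|x-y\|_s \leq R$ translates to $\Ham(x,y) \leq R^s$, and $\|x-y\|_s \geq cR$ translates to $\Ham(x,y) \geq c^s R^s$. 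Thus the restricted family is $(R^s, c^s R^s, p, q)$-sensitive for $\{0,1\}^d$ under Hamming distance, with exactly the same $p, q$, hence the same rho parameter $\rho(\calH)$.

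To finish, I would set up the parameters so as to match the hypotheses of Theorem~\ref{thm:main}. Given the prescribed value of $\tau \in (0,1)$ that appears in Theorem~\ref{thm:main} (applied with the exponent $c^s$ in place of $c$), I choose $R$ so that $R^s = \tau d$, i.e., $R = (\tau d)^{1/s}$. Then the restricted family is $(\tau d, c^s \tau d, p, q)$-sensitive for Hamming distance, so Theorem~\ref{thm:main} yields
\[
\rho(\calH) \geq \frac{1}{c^s} - \wt{O}\left(\frac{\ln(2/q)}{d}\right)^{1/3},
\]
which is exactly the desired bound for the original $\ell_s^d$ family.

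There is essentially no obstacle here, since the reduction uses only the pointwise identity $\|x-y\|_s = \|x-y\|_1^{1/s}$ on boolean vectors; the only thing one must be mildly careful about is that the argument works uniformly across all $0 < s < \infty$ (including $s < 1$, where $\ell_s^d$ is not a metric but only a distance space, as allowed by the definition in Section~\ref{sec:lsh}), and that the embedding $\{0,1\}^d \hookrightarrow \ell_s^d$ is isometric in the required sense. Both are immediate from the identity above.
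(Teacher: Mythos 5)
Your proposal is correct and matches the paper's (essentially one-line) argument: the paper justifies this corollary exactly by the observation that $\|x-y\|_s = \|x-y\|_1^{1/s}$ on $\{0,1\}^d$, so restricting an $\ell_s$ family to the Boolean cube gives a Hamming-distance family with radii raised to the $s$-th power and approximation factor $c^s$, to which Theorem~\ref{thm:main} applies. Your added care in choosing $R=(\tau d)^{1/s}$ to line up with the theorem's prescribed $\tau$, and your note that the argument covers $s<1$ since only a distance space is needed, are exactly the right details to fill in.
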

This lower bound matches the known upper bounds for Euclidean space $s = 2$ (\cite{AI08}) and $0 < s \leq 1$ (\cite{DIIM04}).  It seems reasonable to conjecture that it is also tight at least for $1 < s < 2$.\\

Finally, the lower bound in Theorem~\ref{thm:main} also holds for the Jaccard distance on sets, matching the upper bound of Indyk and Motwani~\cite{IM98}.  We explain why this is true in Section~\ref{sec:proof-idea}, although we omit the very minor necessary changes to the proof details.

\subsection{Noise stability}
Our proof of Theorem~\ref{thm:main} requires some  facts about boolean \emph{noise stability}.  We begin by recalling some basics of the analysis of boolean functions.
\begin{definition}  For $0 < \rho \leq 1$, we say that $(\bx,\by)$ are \emph{$\rho$-correlated random strings} in $\{0,1\}^d$ if $\bx$ is chosen uniformly at random and $\by$ is formed by rerandomizing each coordinate of $\bx$ independently with probability $1-\rho$.
\end{definition}
\begin{definition} Given $f : \{0,1\}^d \to \R$, the \emph{noise stability of~$f$ at $\rho$} is defined to be
\[
\Stab_f(\rho) = \E_{{\substack{(\bx, \by) \\ \text{$\rho$-correlated}}}}[f(\bx)f(\by)].
\]
We can extend the definition to functions $f : \{0,1\}^d \to \R^U$ via
\[
\Stab_f(\rho) = \E_{{\substack{(\bx, \by) \\ \text{$\rho$-correlated}}}}[\langle f(\bx), f(\by) \rangle],
\]
where $\la w, z \ra = \sum_{i \in U} w_i z_i$ is the usual inner product.\footnote{In the case that $U$ is countably infinite, we require our functions $f$ to have $\|f(x)\|_2 < \infty$ for all $x \in \{0,1\}^d$.}
\end{definition}
\begin{proposition}  \label{prop:ns-formula} Let $f : \{0,1\}^d \to \R^U$ and write $\wh{f}(S)$ for the usual Fourier coefficient of $f$ associated with $S \subseteq [d]$; i.e.,
\[
\wh{f}(S) = \frac{1}{2^d} \sum_{x \in \{0,1\}^d} f(x) \prod_{i \in S} (-1)^{x_i}  \in \R^U.
\]
Then
\[
\Stab_f(\rho) = \sum_{S \subseteq [d]} \|\wh{f}(S)\|_2^2 \rho^{|S|}.
\]
\end{proposition}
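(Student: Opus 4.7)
The plan is to prove the formula by Fourier expansion followed by a direct computation of the two-point correlations of the characters $\chi_S(x) = \prod_{i \in S}(-1)^{x_i}$ under the $\rho$-correlated distribution. First I would reduce to the real-valued case: since $\langle f(\bx), f(\by)\rangle = \sum_{u \in U} f(\bx)_u f(\by)_u$ and expectation is linear (and the hypothesis $\|f(x)\|_2 < \infty$ lets one exchange sum and expectation via Fubini/Tonelli for the nonnegative diagonal terms after expansion), it suffices to establish $\Stab_f(\rho) = \sum_S \wh{f}(S)^2 \rho^{|S|}$ for $f : \{0,1\}^d \to \R$ and then sum over $u \in U$, using $\|\wh{f}(S)\|_2^2 = \sum_u \wh{f_u}(S)^2$.

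Next I would write the Fourier expansion $f(x) = \sum_{S \subseteq [d]} \wh{f}(S)\,\chi_S(x)$ (this is just Fourier inversion, using that $\{\chi_S\}$ is an orthonormal basis for $\R^{\{0,1\}^d}$ under $\la g, h\ra = 2^{-d}\sum_x g(x)h(x)$). Substituting into the definition gives
\[
\Stab_f(\rho) = \sum_{S, T \subseteq [d]} \wh{f}(S)\wh{f}(T) \cdot \E_{(\bx,\by)\,\rho\text{-corr.}}\bigl[\chi_S(\bx)\chi_T(\by)\bigr].
\]
So the whole proposition reduces to evaluating the expectation $\E[\chi_S(\bx)\chi_T(\by)]$.

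For this computation I would exploit that the $\rho$-correlated pair factors across coordinates: $(\bx_i, \by_i)$ are independent across $i$, each with $\bx_i$ uniform and $\by_i$ equal to $\bx_i$ with probability $\rho + (1-\rho)/2 = (1+\rho)/2$ and equal to $1-\bx_i$ otherwise. Hence $\E[(-1)^{\bx_i+\by_i}] = \rho$, while for any single uniform bit $\E[(-1)^{\bx_i}] = 0$. Independence across coordinates then gives
\[
\E\bigl[\chi_S(\bx)\chi_T(\by)\bigr] = \prod_{i \in S \cap T} \rho \cdot \prod_{i \in S \triangle T} 0 = \rho^{|S|}\,\mathbb{1}[S=T].
\]
Plugging this back into the double sum collapses it to $\sum_S \wh{f}(S)^2 \rho^{|S|}$, and summing over $u \in U$ yields the vector-valued statement.

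There is no real obstacle here—this is essentially a restatement of Plancherel under the noise operator $T_\rho$, and the only mildly careful point is justifying the interchange of the sum over $U$ with the expectation in the vector-valued case, which is handled by the $L^2$ hypothesis on $f$ stated in the footnote.
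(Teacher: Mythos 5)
Your proof is correct and is precisely the standard argument that the paper invokes without writing out (it cites the real-valued case as standard and notes the $\R^U$ case follows by repeating that proof, which is exactly your reduction via summing over $u \in U$). The character computation $\E[\chi_S(\bx)\chi_T(\by)] = \rho^{|S|}\mathbb{1}[S=T]$ and the Fubini justification for countably infinite $U$ are both handled properly.
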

(This formula is standard when $f$ has range $\R$; see, e.g.,~\cite{O'D03}.  The case when $f$ has range $\R^U$ follows by repeating the standard proof.)\\

We are particularly interested in hash functions $h : \{0,1\}^d \to U$; we view these also as functions $\{0,1\}^d \to \R^U$ by identifying $i \in U$ with the vector $e_i \in \R^{U}$, which has a $1$ in the $i$th coordinate and a $0$ in all other coordinates.  Under this identification, $\la h(x),h(y) \ra$ becomes the $0$-$1$ indicator of the event $h(x) = h(y)$.  Hence for a fixed hash function $h$,
\begin{equation} \label{eqn:h-stab}
\Stab_h(\rho) = \Pr_{{\substack{(\bx, \by) \\ \text{$\rho$-correlated}}}}[h(\bx) = h(\by)].
\end{equation}
We also extend the notion of noise stability to hash \emph{families}:
\begin{definition} If $\calH$ is a hash family on $\{0,1\}^d$, we define
\[
\Stab_{\calH}(\rho) = \E_{\bh \sim \calH}[\Stab_{\bh}(\rho)].
\]
\end{definition}
By combining this definition with equation~\eqref{eqn:h-stab} and Proposition~\ref{prop:ns-formula}, we immediately deduce:
\begin{proposition}  \label{prop:main} Let $\calH$ be a hash family on $\{0,1\}^d$.  Then
\[
\Stab_{\calH}(\rho)\quad= \Pr_{\substack{\bh \sim \calH, \\(\bx, \by)\text{ $\rho$-corr'd}}}[\bh(\bx) = \bh(\by)] \quad=\quad \sum_{S \subseteq [d]} \E_{\bh \sim \calH}[\|\wh{\bh}(S)\|_2^2] \rho^{|S|}.
\]
\end{proposition}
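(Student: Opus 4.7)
The plan is to derive both equalities by a direct chain of substitutions, combining the three ingredients already on the table: the definition $\Stab_{\calH}(\rho) = \E_{\bh \sim \calH}[\Stab_{\bh}(\rho)]$, the identity~\eqref{eqn:h-stab} for fixed hash functions, and Proposition~\ref{prop:ns-formula} applied to $f = h$ viewed as a function into $\R^U$. Since nothing here is subtle, I expect this to be essentially a formal manipulation rather than a proof with a genuine ``hard step.''

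For the first equality, I would start from the definition of $\Stab_{\calH}(\rho)$ and then substitute~\eqref{eqn:h-stab} inside the outer expectation over $\bh$, obtaining
\[
\Stab_{\calH}(\rho) = \E_{\bh \sim \calH}\Bigl[\Pr_{(\bx,\by)\text{ $\rho$-corr'd}}[\bh(\bx) = \bh(\by)]\Bigr].
\]
Since $\bh$ is drawn independently of the pair $(\bx,\by)$, this nested expression is exactly the joint probability over all three random variables, which is the middle term of the proposition. (If one prefers, rewrite the probability as $\E_{(\bx,\by)}[\mathbf{1}\{\bh(\bx)=\bh(\by)\}]$ and swap the two expectations by Fubini.)

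For the second equality, I would fix an arbitrary $\bh$ in the support of $\calH$, view it as a function into $\R^U$ via the standard-basis-vector encoding described before Proposition~\ref{prop:main}, and apply Proposition~\ref{prop:ns-formula} to get
\[
\Stab_{\bh}(\rho) = \sum_{S \subseteq [d]} \|\wh{\bh}(S)\|_2^2 \, \rho^{|S|}.
\]
Taking expectation over $\bh \sim \calH$ and pushing the expectation through the (finite-degree-$d$) sum by linearity yields the right-hand side of the proposition.

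The only thing to be careful about is the case when $U$ is countably infinite, where the Fourier expansion in Proposition~\ref{prop:ns-formula} lives in $\R^U$; but the assumption $\|h(x)\|_2 < \infty$ for every $x$ (and in fact $\|h(x)\|_2 = 1$ under our encoding) ensures all quantities converge absolutely and the swap of expectation and sum over $S$ is valid. Since the number of subsets $S \subseteq [d]$ is finite, there is really no convergence issue at all. Thus both equalities follow immediately, and the proof is just this chain of substitutions.
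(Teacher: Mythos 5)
Your proposal is correct and is exactly the argument the paper intends: the paper deduces Proposition~\ref{prop:main} immediately by combining the definition of $\Stab_{\calH}$, equation~\eqref{eqn:h-stab}, and Proposition~\ref{prop:ns-formula}, which is precisely your chain of substitutions. The extra remarks about Fubini and convergence over a countably infinite $U$ are fine but not needed since the sum over $S \subseteq [d]$ is finite, as you note.
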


Finally, it is sometimes more natural to express the parameter $\rho$ as $\rho = e^{-t}$, where $t \in [0,\infty)$.  (For example, we can think of a $\rho$-correlated pair $(\bx,\by)$ by taking $\bx$ to be uniformly random and $\by$ to be the string that results from running the standard continuous-time Markov Chain on~$\{0,1\}^d$, starting from $\bx$, for time $td$.)  We make the following definition:
\begin{definition} For $t \in [0,\infty)$, we define $\K_h(t) = \Stab_h(e^{-t})$, and we similarly define $\K_{\calH}(t)$.
\end{definition}

\subsection{The proof, modulo some tedious calculations} \label{sec:proof-idea}
We now present the essence of our proof of Theorem~\ref{thm:main}.  It will be quite simple to see how it gives a lower bound of the form $\frac{1}{c} - o_d(1)$ (assuming $q$ is not tiny). Some very tedious calculations (Chernoff bounds, elementary inequalities, etc.) are needed to get the precise statement given in Theorem~\ref{thm:main}; the formal proof is therefore deferred to Section~\ref{sec:proof}.\\

Let $\calH$ be a hash family on $\{0,1\}^d$, and let us consider
\begin{equation} \label{eqn:consider}
\K_{\calH}(t) = \Pr_{\substack{\bh \sim \calH, \\(\bx, \by)\text{ $e^{-t}$-corr'd}}}[\bh(\bx) = \bh(\by)].
\end{equation}
Let us suppose that $t$ is very small, in which case $e^{-t} \approx 1 - t$.  When $(\bx, \by)$ are $(1-t)$-correlated strings, it means that $\by$ is formed from the random string $\bx$ by rerandomizing each coordinate with probability $t$.  This is the same as flipping each coordinate with probability $t/2$.  Thus if we think of $d$ as large, a simple Chernoff bound shows that the Hamming distance $\dist(\bx,\by)$ will be very close to $(t/2)d$ with overwhelming probability.\footnote{Similarly, if we think of $\bx$ and $\by$ as subsets of $[d]$, their Jaccard distance will be very close to $t/(1+t/2) \approx t$ with overwhelming probability. With this observation, one obtains our lower bound on LSH families for the Jaccard distance on sets.} \\

Suppose now that $\calH$ is $((t/2)d + o(d), (ct/2)d -o(d), p, q)$-sensitive, so the distance ratio is $c - o_d(1)$.  In~\eqref{eqn:consider}, regardless of $\bh$ we will almost surely have $\dist(\bx,\by) \leq (t/2) + o(d)$; hence $\K_{\calH}(t) \geq p - o_d(1)$.  Similarly, we deduce $K_{\calH}(ct) \leq q + o_d(1)$.  Hence, neglecting the $o_d(1)$ terms, we get
\[
\rho(\calH) = \frac{\ln(1/p)}{\ln(1/q)} \gtrsim \frac{\ln(1/\K_{\calH}(t))}{\ln(1/\K_{\calH}(ct))}.
\]
We then deduce the desired lower bound of $1/c$ from the following theorem and its corollary:
\begin{theorem}  For any hash family $\calH$ on $\{0,1\}^d$, the function $\K_\calH(t)$ is log-convex in $t$.
\end{theorem}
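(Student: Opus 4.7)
The plan is to combine Proposition \ref{prop:main} with Hölder's inequality. By Proposition \ref{prop:main},
\[
\K_\calH(t) \;=\; \Stab_\calH(e^{-t}) \;=\; \sum_{S \subseteq [d]} a_S\, e^{-t|S|}, \qquad a_S := \E_{\bh \sim \calH}[\|\wh{\bh}(S)\|_2^2] \geq 0.
\]
Regrouping by $k = |S|$ gives $\K_\calH(t) = \sum_{k=0}^{d} b_k e^{-tk}$ with $b_k = \sum_{|S|=k} a_S \geq 0$. Thus the whole content of the statement is that a non-negative combination of (decaying) exponentials $t \mapsto \sum_k b_k e^{-tk}$ is log-convex on $[0,\infty)$, a standard fact.

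To verify it, I would show the equivalent chord inequality: for any $s, u \in [0,\infty)$ and any $\alpha \in [0,1]$,
\[
\K_\calH\bigl(\alpha s + (1-\alpha) u\bigr) \;\leq\; \K_\calH(s)^{\alpha}\,\K_\calH(u)^{1-\alpha}.
\]
Writing $e^{-(\alpha s + (1-\alpha)u)k} = \bigl(e^{-sk}\bigr)^{\alpha}\bigl(e^{-uk}\bigr)^{1-\alpha}$ and splitting $b_k = b_k^{\alpha}\,b_k^{1-\alpha}$, the left-hand side becomes
\[
\sum_{k=0}^{d} \bigl(b_k e^{-sk}\bigr)^{\alpha}\bigl(b_k e^{-uk}\bigr)^{1-\alpha},
\]
to which Hölder's inequality with exponents $1/\alpha$ and $1/(1-\alpha)$ applies, yielding exactly $\K_\calH(s)^{\alpha}\K_\calH(u)^{1-\alpha}$. (Alternatively, one could differentiate twice: writing $K = \sum b_k e^{-tk}$, the second derivative of $\ln K$ equals $(K''K - (K')^2)/K^2$, and $K''K \geq (K')^2$ is precisely Cauchy--Schwarz applied to the vectors $(\sqrt{b_k}\,e^{-tk/2})_k$ and $(k\sqrt{b_k}\,e^{-tk/2})_k$; this handles $\alpha = 1/2$, and the general Hölder version then just removes the need to invoke smoothness.)

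There is no real obstacle — the only thing to notice is that the Fourier expansion of Proposition \ref{prop:main} exhibits $\K_\calH$ as a positive combination of exponentials, after which log-convexity is one line of Hölder. The proof makes no use of any property of $\calH$ beyond the non-negativity of the coefficients $\E_{\bh}[\|\wh{\bh}(S)\|_2^2]$, so it applies uniformly to every hash family and in particular requires no boundedness or metric assumption on the range $U$.
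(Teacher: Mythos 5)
Your proof is correct and follows essentially the same route as the paper: both use Proposition~\ref{prop:main} to write $\K_\calH(t)$ as a nonnegative combination of exponentials $e^{-t|S|}$ and then invoke log-convexity of such sums; the paper simply cites this closure property as standard, whereas you supply the (correct) one-line H\"older verification.
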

\begin{proof} From Proposition~\ref{prop:main} we have
\[
\K_{\calH}(t) = \sum_{S \subseteq [d]} \E_{\bh \sim \calH}[\|\wh{\bh}(S)\|_2^2]e^{-t|S|}.
\]
Thus $\K_{\calH}(t)$ is log-convex, being a nonnegative linear combination of log-convex functions $e^{-t|S|}$.
\end{proof}
\begin{corollary} \label{cor:log-conc} For any hash family $\calH$ on $\{0,1\}^d$, $t \geq 0$, and $c \geq 1$,
\[
\frac{\ln(1/\K_{\calH}(t))}{\ln(1/\K_{\calH}(ct))} \geq \frac{1}{c}.
\]
\end{corollary}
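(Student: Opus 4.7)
The plan is to reduce the corollary to a one-line convexity argument, using the preceding log-convexity theorem together with a single boundary value. First I would identify that boundary value: $\K_\calH(0) = 1$, because at $\rho = e^{0} = 1$ the ``$\rho$-correlated'' pair $(\bx,\by)$ is just $\by = \bx$, so every $\bh$ collides on it. Therefore the convex function $f(t) := \ln \K_\calH(t)$ satisfies $f(0) = 0$.

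Next, since $c \geq 1$, I would write $t$ as the convex combination $\tfrac{1}{c}(ct) + (1 - \tfrac{1}{c})(0)$ and apply convexity of $f$:
\[
f(t) \;\leq\; \tfrac{1}{c}\, f(ct) + \bigl(1 - \tfrac{1}{c}\bigr) f(0) \;=\; \tfrac{1}{c}\, f(ct).
\]
Both $-f(t)$ and $-f(ct)$ are nonnegative because $\K_\calH \leq 1$; negating the displayed inequality and dividing through by the positive quantity $\ln(1/\K_\calH(ct)) = -f(ct)$ yields exactly $\ln(1/\K_\calH(t))/\ln(1/\K_\calH(ct)) \geq 1/c$.

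The only technicality is the degenerate case $\K_\calH(ct) = 1$, in which the right-hand side of the stated ratio is ill-defined. Here I would observe that $\K_\calH(t) = \sum_S \E_{\bh}[\|\wh{\bh}(S)\|_2^2]\, e^{-t|S|}$ is manifestly nonincreasing in $t$ (its $t$-derivative is a nonpositive combination of exponentials), so $\K_\calH(ct) = 1$ forces $\K_\calH(t) = 1$ as well, and the statement holds trivially (or is conventionally vacuous).

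There is essentially no real obstacle here: all the content has been absorbed into the log-convexity theorem, and the corollary is just the standard observation that a convex function vanishing at the origin is subhomogeneous, i.e.\ $f(ct) \geq c\, f(t)$ for $c \geq 1$. The only care needed is the sign bookkeeping when one divides by $-f(ct)$, and the trivial handling of the equality-at-$1$ boundary case.
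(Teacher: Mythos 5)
Your proposal is correct and is essentially the paper's own proof: the paper applies log-convexity as $\K_{\calH}(t) \leq \K_{\calH}(ct)^{1/c}\,\K_{\calH}(0)^{1-1/c}$ with $\K_{\calH}(0)=1$, which is exactly your convex-combination inequality for $f = \ln \K_{\calH}$ in exponentiated form. Your extra attention to the sign bookkeeping and the degenerate case $\K_{\calH}(ct)=1$ is fine but adds nothing the paper's argument lacks in substance.
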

\begin{proof}
By log-convexity, $\K_{\calH}(t)  \leq \K_{\calH}(ct)^{1/c} \cdot K_{\calH}(0)^{1-1/c} = \K_{\calH}(ct)^{1/c}$.  Here we used the fact that $K_{\calH}(0)=1$, which is immediate from the definitions because $e^{-0}$-correlated strings are always identical. The result follows.
\end{proof}

As mentioned, we give the careful proof keeping track of approximations in Section~\ref{sec:proof}.  But first, we note what we view as a shortcoming of the proof: after deducing $K_{\calH}(ct) \geq q - o_d(1)$, we wish to ``neglect'' the additive $o_d(1)$ term.  This requires that $o_d(1)$ indeed be negligible compared to $q$!  Being more careful, the $o_d(1)$ arises from a Chernoff bound applied to a Binomial$(d, ct)$ random variable, where $t > 0$ is very small.  So to be more precise, the error term is of the form $\exp(-\eps d)$, and hence is only negligible if $q \geq 2^{-o(d)}$.

\section{Discussion} \label{sec:discussion}

\subsection{On the reduction from LSH to near neighbor data structures}
As described in Section~\ref{sec:lsh}, it is normally stated that the quality of an $(r,cr,p,q)$-sensitive LSH family $\calH$ is governed by $\rho = \ln(1/p)/\ln(1/q)$, and more specifically that $\calH$ can be used to solve the $(r,c)$-near neighbor problem with roughly $O(n^{1+\rho})$ space and query time $O(n^{\rho})$.  However, this involves the implicit assumption that $q$ is bounded away from~$0$.\\

It is easy to see that \emph{some} lower bound on $q$ is essential. Indeed, for any (finite, say) distance space $(X,\dist)$ there is a trivially ``optimal'' LSH family for any $r$ and $c$:  For each pair $x,y \in X$ with $\dist(x,y) \leq r$, define $h_{x,y}$ by setting $h_{x,y}(x) = h_{x,y}(y) = 0$ and letting $h_{x,y}(z)$ have distinct positive values for all $z \neq x,y$.  If $\calH$ is the uniform distribution over all such $h_{x,y}$, then $p > 0$ and $q = 0$, leading to $\rho(\calH) = 0$.\\

To see why this trivial solution is not useful, and what lower bound on $q$ is desirable, we recall some aspects of the Indyk--Motwani reduction from LSH families to $(r,c)$-near neighbor data structures.  Suppose one wishes to build an $(r,c)$-near neighbor data structure for an $n$-point subset $P$ of the metric space $(X, \dist)$.  The first step in~\cite{IM98} is to apply the following:

\paragraph{Powering Construction:} Given an $(r, cr, p, q)$-sensitive family $\calH$ of functions $X \to U$ and a positive integer $k$, we define the family $\calH^{\otimes k}$ by drawing $\bh_1, \dots, \bh_k$ independently from $\calH$ and forming the function $\bh : X \to U^k$, $\bh(x) = (\bh_1(x), \dots, \bh_k(x))$.  It is easy to check that $\calH^{\otimes k}$ is $(r, cr, p^k, q^k)$-sensitive.\\

Indyk and Motwani show that if one has an $(r, cr, p', q')$-sensitive hash family with $q' \leq 1/n$, then one can obtain a $(r,c)$-near neighbor data structure with space roughly $O(n/p')$ and query time roughly $O(1/p')$.  Thus given an arbitrary $(r,cr,p,q)$-sensitive family $\calH$, Indyk and Motwani suggest using the Powering Construction with $k = \log_{1/q}(n)$.  The resulting  $\calH^{\otimes k}$ is $(r, cr, p', 1/n)$-sensitive, with $p' = p^k = n^{-\rho}$, yielding an $O(n^{1+\rho})$ space, $O(n^{\rho})$ time data structure.\\

However this argument makes sense only if $k$ is a positive integer.  For example, with the trivially ``optimal'' LSH family, we have $q = 0$ and thus $k = -\infty$.  Indeed, whenever $q \leq 1/n$ to begin with, one doesn't get $O(n^{1+\rho})$ space and $O(n^{\rho})$ time, one simply gets $O(n/p)$ space and $O(1/p)$ time.  For example, a hypothetical LSH family with $p = 1/n^{.5}$ and $q = 1/n^{1.5}$ has $\rho = 1/3$ but only yields an $O(n^{1.5})$ space, $O(n^{.5})$ time near neighbor data structure.\\

The assumption $q > 1/n$ is still not enough for the deduction in Theorem~\ref{thm:lsh-to-nn} to hold precisely.  The reason is that the Indyk--Motwani choice of $k$ may not be an integer.  For example, suppose we design an $(r, cr, p, q)$-sensitive family $\calH$ with $p = 1/n^{.15}$ and $q = 1/n^{.3}$.  Then $\rho = .5$.  However, we cannot actually get an $O(n^{1.5})$ space, $O(n^{.5})$ time data structure from this $\calH$.  The reason is that to get $q^k \leq 1/n$, we need to take $k = 4$.  Then $p^k = 1/n^{.6}$, so we only get an $O(n^{1.6})$ space, $O(n^{.6})$ time data structure.\\

The effect of rounding $k$ up to the nearest integer is not completely eliminated unless one makes the assumption, implicit in  Theorem~\ref{thm:lsh-to-nn}, that $q \geq \Omega(1)$.  Under the weaker assumption that $q \geq n^{-o(1)}$, the conclusion of Theorem~\ref{thm:lsh-to-nn} remains true up to $n^{o(1)}$ factors.  To be completely precise, one should assume $q \geq 1/n$ and take $k = \lceil \log_{1/q}(n) \rceil$.  If we then use $k \leq \log_{1/q}(n) + 1$, the Powering Construction will yield an LSH family with $q' \leq 1/n$ and $p' = (n/q)^{-\rho}$.  In this way, one obtains a refinement of Theorem~\ref{thm:lsh-to-nn} with no additional assumptions:
\begin{theorem} \label{thm:lsh-to-nn2} Suppose $\calH$ is an $(r,cr,p,q)$-sensitive family for the metric space $(X,\dist)$.  Then for $n$-point subsets of $X$ (and assuming $q \geq 1/n$), one can solve the $(r,c)$-near neighbor problem with a (randomized) data structure that uses $n \cdot O((n/q)^{\rho}+ d)$ space and has query time dominated by $O((n/q)^{\rho} \log_{1/q}(n))$ hash function evaluations.
\end{theorem}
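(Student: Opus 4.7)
The plan is to follow the original Indyk--Motwani reduction, but to carefully track the effect of rounding $k$ to an integer rather than implicitly assuming $q$ is bounded away from $0$. I would separate the argument into two pieces: (i) the Powering Construction, already spelled out in the preceding discussion, and (ii) a ``base'' reduction that turns any LSH family whose far-pair collision probability is at most $1/n$ into a near-neighbor data structure.

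First I would state the base reduction cleanly. Given an $(r,cr,p',q')$-sensitive family $\calH'$ on $X \to U'$ with $q' \leq 1/n$, the data structure draws $L = \lceil 1/p'\rceil$ independent $\bh_1,\dots,\bh_L \sim \calH'$, stores $P$ in $L$ hash tables (the $\ell$th keyed by $\bh_\ell$), and on a query $x$ scans, for each $\ell$, the bucket $\bh_\ell(x)$ until either a point within distance $cr$ is found or a constant number of far points have been inspected (to cap the scan work). A standard union bound shows that a true near neighbor is retrieved with constant probability; since the expected number of spurious collisions is $O(L\cdot n \cdot q')=O(L)$, the inspection cost is absorbed into $O(L)$ hash evaluations. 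The space is $O(Ln)$ for the tables plus $O(dn)$ for storing the points themselves. A constant number of independent repetitions boosts the success probability as desired.

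Second I would apply the Powering Construction with $k=\lceil \log_{1/q}(n)\rceil$, which is a well-defined positive integer precisely because we assume $q \geq 1/n$. The family $\calH^{\otimes k}$ is $(r,cr,p^k,q^k)$-sensitive, and the choice makes $q^k \leq 1/n$ as required by the base reduction. The only substantive calculation is the lower bound on $p^k$: using $p = q^\rho$ (from the definition of $\rho$) together with $k \leq \log_{1/q}(n)+1$,
\[
p^k = q^{\rho k} \geq q^{\rho(\log_{1/q}(n)+1)} = (1/n)^\rho \cdot q^\rho = (n/q)^{-\rho}.
\]
Feeding $p' = p^k$ into the base reduction gives space $n \cdot O((n/q)^\rho + d)$ and $L = O((n/q)^\rho)$ evaluations of functions in $\calH^{\otimes k}$; since each such evaluation costs $k = O(\log_{1/q}(n))$ evaluations of functions in $\calH$, the query cost is $O((n/q)^\rho \log_{1/q}(n))$ original hash evaluations, as stated.

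I do not anticipate a real obstacle: the only subtlety is the one the authors themselves flag, namely that the rounding $\lceil \log_{1/q}(n)\rceil \leq \log_{1/q}(n)+1$ contributes the extra factor $q^{-\rho}$ distinguishing $(n/q)^\rho$ from the cleaner $n^\rho$ one obtains when $q = \Omega(1)$. Everything else is a straightforward rephrasing of the Indyk--Motwani analysis with $p'$ and $q'$ left symbolic, and no new probabilistic idea is needed.
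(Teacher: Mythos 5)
Your proposal is correct and follows essentially the same route as the paper: the paper justifies this theorem by citing the Indyk--Motwani base reduction for families with $q' \leq 1/n$ and then applying the Powering Construction with $k = \lceil \log_{1/q}(n) \rceil \leq \log_{1/q}(n) + 1$, which is exactly your step (ii), and your computation $p^k \geq (n/q)^{-\rho}$ matches theirs. The only difference is that you spell out the standard bucketing argument for the base reduction, which the paper leaves as a citation.
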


\subsection{On assuming $q$ is not tiny}
Let us return from the near-neighbor problem to the study of locality sensitive hashing itself.  Because of the ``trivial'' LSH family, it is essential to impose some kind of lower bound on how small the parameter $q$ is allowed to be.  Motwani, Naor, and Panigrahy carry out their lower bound for LSH families on $\{0,1\}^d$ under the assumption that $q \geq \Omega(1)$, but also note that it goes through assuming $q \geq 2^{-o(d)}$.  Our main result, Theorem~\ref{thm:main}, is also best when $q \geq 2^{-o(d)}$, and is only nontrivial assuming $q \geq 2^{-d/B}$ for a sufficiently large constant $B$.\\

One may ask what the ``correct'' lower bound assumed on $q$ should be.  For the Indyk--Motwani application to $(r,c)$-near neighbor data structures, the answer seems obvious: ``$1/n$''.  Indeed, since the Indyk--Motwani reduction immediately uses Powering to reduce the $q$ parameter down to $1/n$, the most meaningful LSH lower bounds would simply involve fixing $q = 1/n$ and trying to lower bound $p$.\\

There is an obvious catch here, though, which is that in the definition of LSH, there \emph{is no notion of ``$n$''}!  Still, in settings such as $\{0,1\}^d$ which have a notion of dimension, $d$, it seems reasonable to think that applications will have $n = 2^{\Theta(d)}$.  In this case, to maintain the Indyk--Motwani Theorem~\ref{thm:lsh-to-nn2} up to $n^{o(1)}$ factors one would require $q \geq 2^{-o(d)}$.  This is precisely the assumption that this paper and the Motwani--Naor--Panigrahy paper have made.  Still, we believe that the most compelling kind of LSH lower bound for $\{0,1\}^d$ would be nontrivial even for $q = 2^{-d/b}$ with a ``medium'' constant $b$, say $b = 10$.  We currently do not have such a lower bound.

\section{Proof details} \label{sec:proof}
We require the following lemma, whose proof follows easily from Proposition~\ref{prop:main} and the definition of hash family sensitivity:
\begin{lemma} \label{lem:rand} Let $\calH$ be an $(r, cr, p, q)$-sensitive hash family on $\{0,1\}^d$ and suppose $(\bx, \by)$ is a pair of $e^{-u}$-correlated random strings.  Then
\[
p(1 - \Pr[\dist(\bx, \by) > r]) \leq \K_{\calH}(u) \leq q + \Pr[\dist(\bx, \by) < cr].
\]
\end{lemma}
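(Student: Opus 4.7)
The plan is to prove both inequalities by conditioning on the Hamming distance $\dist(\bx,\by)$ and invoking the LSH sensitivity definition pointwise. Using Proposition~\ref{prop:main}, I would rewrite
\[
\K_{\calH}(u) \;=\; \Pr_{\substack{\bh \sim \calH,\\ (\bx,\by)\text{ $e^{-u}$-corr'd}}}[\bh(\bx) = \bh(\by)] \;=\; \E_{(\bx,\by)}\Bigl[\,\Pr_{\bh \sim \calH}[\bh(\bx) = \bh(\by)]\,\Bigr],
\]
which is legitimate since $(\bx,\by)$ is independent of $\bh$. This reduces the lemma to bounding the inner probability in terms of $p$ and $q$ on the events $\{\dist(\bx,\by) \leq r\}$ and $\{\dist(\bx,\by) \geq cr\}$, respectively.

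For the upper bound, I split according to whether $\dist(\bx,\by) \geq cr$ or not. On the former event, $(r,cr,p,q)$-sensitivity guarantees $\Pr_{\bh}[\bh(\bx) = \bh(\by)] \leq q$; on the complementary event I use the trivial bound $\leq 1$. This gives
\[
\K_{\calH}(u) \;\leq\; q \cdot \Pr[\dist(\bx,\by) \geq cr] + 1 \cdot \Pr[\dist(\bx,\by) < cr] \;\leq\; q + \Pr[\dist(\bx,\by) < cr],
\]
as desired.

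For the lower bound, I split according to whether $\dist(\bx,\by) \leq r$ or not. On the former event, sensitivity gives $\Pr_{\bh}[\bh(\bx) = \bh(\by)] \geq p$; on the complementary event I use the trivial bound $\geq 0$. This yields
\[
\K_{\calH}(u) \;\geq\; p \cdot \Pr[\dist(\bx,\by) \leq r] \;=\; p\bigl(1 - \Pr[\dist(\bx,\by) > r]\bigr).
\]

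There is essentially no obstacle here: the whole argument is a routine case split on the distance, made possible by the independence of the hash draw from the correlated pair. The only subtlety worth noting is that swapping the order of the probability operators requires this independence, which is built into the joint distribution defining $\K_{\calH}(u)$ via Proposition~\ref{prop:main}.
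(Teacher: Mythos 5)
Your proof is correct and is exactly the routine conditioning argument the paper has in mind (the paper omits the proof, remarking only that the lemma ``follows easily from Proposition~\ref{prop:main} and the definition of hash family sensitivity''). The case split on $\dist(\bx,\by)$ with the trivial bounds $1$ and $0$ on the complementary events is the intended derivation.
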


We now prove Theorem~\ref{thm:main}, which for convenience we slightly rephrase as follows:
\begin{theorem} Fix $d \in \N$, $1 < c < \infty$, and $0 < q < 1$.  Then for a certain choice of $0 < \eps < 1$, any $((\eps/c)d, \eps d, p, q)$-sensitive hash family for $\{0,1\}^d$ under Hamming distance must satisfy
\[
\rho = \frac{\ln(1/p)}{\ln(1/q)} \geq \frac{1}{c} - K \cdot \lambda(d,q)^{1/3},
\]
where $K$ is a universal constant,
\[
\lambda(d,q) = \frac{\ln(2/q)}{d}  \ln\left(\frac{d}{\ln(2/q)}\right),
\]
and we assume $d/\ln(2/q) \geq 2$, say.
\end{theorem}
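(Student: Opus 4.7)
The plan is to execute the argument sketched in Section~\ref{sec:proof-idea} carefully, with one twist: apply Corollary~\ref{cor:log-conc} at a ratio $\tilde c = c + \eta$ slightly larger than $c$, and absorb the resulting $1/c - 1/\tilde c \approx \eta/c^2$ into the final loss. This widening is essentially forced on us by the Bernoulli inequality $\alpha(cu) \le c\,\alpha(u)$ (with $\alpha(t) = (1-e^{-t})/2$): the two sensitivity thresholds $r = (\eps/c)d$ and $cr = \eps d$ have ratio \emph{exactly} $c$, so there is no positive Chernoff slack available on both sides at the correlations $e^{-u}$ and $e^{-cu}$. Widening the ratio to $\tilde c > c$ creates the slack we need for concentration to work simultaneously above $r$ and below $cr$.

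Concretely: fix parameters $u > 0$, $\tilde c = c + \eta$, and $\eps \in (0,1)$ (all to be chosen), and let $\alpha_1 = \alpha(u)$, $\alpha_2 = \alpha(\tilde c u)$. Two applications of Lemma~\ref{lem:rand} give $\K_\calH(u) \ge p(1 - P_1)$ and $\K_\calH(\tilde c u) \le q + P_2$, where $P_1 = \Pr[\mathrm{Binom}(d,\alpha_1) > (\eps/c)d]$ and $P_2 = \Pr[\mathrm{Binom}(d,\alpha_2) < \eps d]$. Log-convexity on $[0, \tilde c u]$ (using $\K_\calH(0) = 1$) then yields $\K_\calH(u) \le \K_\calH(\tilde c u)^{1/\tilde c}$. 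Chaining the three estimates, taking logarithms, and using $\ln(1+x) \le x$ and $\ln(1/(1-x)) \le 2x$ (for $x \le 1/2$) produces
\[
\rho \;\ge\; \frac{1}{c} \;-\; \frac{\eta}{c^2} \;-\; \frac{P_2}{\tilde c\, q\, \ln(1/q)} \;-\; \frac{2 P_1}{\ln(1/q)}.
\]

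The task is then to choose $u,\eps,\tilde c$ so that each subtracted term is $O(\lambda^{1/3})$. Using the Taylor expansion $\alpha_2 = \tilde c\,\alpha_1 - \tilde c(\tilde c - 1)\,\alpha_1^2 + O(\alpha_1^3)$, requiring both $\mathrm{margin}_1 = (\eps/c)d - \alpha_1 d > 0$ and $\mathrm{margin}_2 = \alpha_2 d - \eps d > 0$ rearranges (for $\tilde c$ close to $c$ and $\alpha_1$ close to $\eps/c$) to the feasibility condition
\[
\eta \;\gtrsim\; (c-1)\,\eps \;+\; \frac{c^2\,\mathrm{margin}_1 + c\,\mathrm{margin}_2}{\eps\,d}.
\]
A multiplicative Chernoff bound controls $P_1$ and $P_2$ in terms of $\mathrm{margin}_i^2/(\alpha_i d)$; asking that both be small enough to defeat the corresponding loss terms forces $\mathrm{margin}_i \gtrsim \sqrt{\alpha_i d \cdot L}$ with $L = \Theta\bigl(\ln(2/q) + \ln(d/\ln(2/q))\bigr)$. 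The first summand in $L$ ensures $P_2 \le q$, while the second shrinks $P_2$ further by a polynomial-in-$(d/\ln(2/q))$ factor to defeat the awkward $1/q$ in the third loss term---this is exactly the origin of the $\ln(d/\ln(2/q))$ factor inside $\lambda$. Substituting these margins into the feasibility condition reduces it to $\eta \gtrsim (c-1)\eps + c^{3/2}\sqrt{L/(\eps d)}$; this is a standard two-term AM-GM-style minimization, balanced by choosing $\eps \sim (L/d)^{1/3}$, which yields $\eta = O(c(c-1)^{1/3}(L/d)^{1/3})$ and hence $\eta/c^2 = O(\lambda^{1/3})$.

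I expect the main obstacle to be bookkeeping rather than any new idea. The Taylor remainder of $g_{\tilde c}(x) = (1-(1-2x)^{\tilde c})/2$ must be controlled by an explicit inequality (not just the symbolic $O(\alpha_1^3)$) so that it can actually be absorbed into the $(c-1)\eps$ leading term; the multiplicative Chernoff bound must be applied sharply so that the ``$+\ln(d/\ln(2/q))$'' inside $L$ is not loosened; and one must verify that the prescribed $\eps = c(\alpha_1 + \mathrm{margin}_1/d)$ lies in $(0,1)$ for the regime $d/\ln(2/q) \ge 2$ and that all $c$-dependent constants can be absorbed into the single universal constant $K$ promised by the theorem. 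Once these details are in hand, the three loss contributions $\eta/c^2$, $P_1/\ln(1/q)$, and $P_2/(q\ln(1/q))$ each come in at scale $\lambda^{1/3}$, and the bound $\rho \ge 1/c - K\lambda^{1/3}$ follows.
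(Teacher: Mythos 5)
Your plan is essentially the paper's proof: the paper likewise widens the correlation ratio to $c' = c(1+\Delta)$ (your $\tilde c = c+\eta$), applies Lemma~\ref{lem:rand} at correlations $e^{-t/c'}$ and $e^{-t}$, invokes Corollary~\ref{cor:log-conc}, and balances the widening loss $\Delta/c$ against the Chernoff tails by taking $\eps = \Theta(\Delta)$ and $\Delta = \Theta\bigl((c\lambda)^{1/3}\bigr)$. Your observations about why the widening is forced, where the $\ln(d/\ln(2/q))$ factor comes from, and how the cube root arises from the two-term balancing all match the paper, and the remaining bookkeeping goes through as you anticipate.
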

\begin{proof}
Let $0 < \Delta = \Delta(c, d, q) < .005$ be a small quantity to be chosen later, and let $\eps = .005 \Delta$.  Suppose that $\calH$ is an $((\eps/c)d, \eps d, p, q)$-sensitive hash family for $\{0,1\}^d$.  Our goal is to lower bound $\rho = \ln(1/p)/\ln(1/q)$.  By the Powering Construction we may assume that $q \leq 1/e$, and hence will use  $\ln(1/q) \geq 1$ without further comment.  Define also $t = 2\eps(1+\Delta/2)$ and $c' = c(1+\Delta)$.

Let $(\bx_1, \by_1)$ be $\exp(-t/c')$-correlated random strings and let $(\bx_2, \by_2)$ be $\exp(-t)$-correlated random strings.  Using the two bounds in Lemma~\ref{lem:rand} separately, we have
\[
\K_{\calH}(t/c') \geq p(1 - e_1), \qquad
\K_{\calH}(t)  \leq  q + e_2,
\]
where
\[
e_1 = \Pr[\dist(\bx_1, \by_1) > (\eps/c) d], \qquad e_2 = \Pr[\dist(\bx_2, \by_2) < \eps d].
\]
By Corollary~\ref{cor:log-conc}, we have
\begin{equation} \label{eqn:main1}
\frac{1}{c'} \leq \frac{\ln\Bigl(1/\K_{\calH}(t/c')\Bigr)}{\ln\Bigl(1/\K_{\calH}(t)\Bigr)} \leq \frac{\ln\left(\frac{1}{p(1-e_1)}\right)}{\ln\left(\frac{1}{q+e_2}\right)} = \frac{\ln(1/p) + \ln(1/(1-e_1))}{\ln(1/q) + \ln(1/(1+ e_2/q))}.
\end{equation}
We will use the following estimates:
\begin{gather}
\frac{1}{c'} = \frac{1}{c(1+\Delta)} \geq \frac{1}{c}(1-\Delta) = \frac{1}{c} - \frac{\Delta}{c}, \label{eqn:est-c}\\
\ln(1/(1-e_1)) \leq 1.01 e_1, \label{eqn:est-p} \\
\ln(1/q) + \ln(1/(1+e_2/q)) \geq \ln(1/q) - e_2/q = \ln(1/q)\Bigl(1-\frac{e_2}{q\ln(1/q)}\Bigr). \label{eqn:est-q}
\end{gather}
For~\eqref{eqn:est-p} we made the following
\begin{equation} \label{eqn:assn1}
\textbf{assumption:} \qquad e_1 \leq .01.
\end{equation}
We will also ensure that the quantity in~\eqref{eqn:est-q} is positive by making the following
\begin{equation} \label{eqn:assn2}
\textbf{assumption:} \qquad e_2 < q \ln(1/q).
\end{equation}
Substituting the three estimates~\eqref{eqn:est-c}--\eqref{eqn:est-q} into~\eqref{eqn:main1} we obtain
\begin{eqnarray*}
\frac{1}{c} - \frac{\Delta}{c} \leq \frac{\ln(1/p) + 1.01e_1}{\ln(1/q)\Bigl(1-\frac{e_2}{q\ln(1/q)}\Bigr)} &\Rightarrow& \frac{\ln(1/p) + 1.01e_1}{\ln(1/q)} \geq \left(\frac{1}{c} - \frac{\Delta}{c}\right)\left(1-\frac{e_2}{q\ln(1/q)}\right) \\
&\Rightarrow& \frac{\ln(1/p)}{\ln(1/q)} \geq \frac{1}{c} - \frac{\Delta}{c}-\frac{e_2}{q\ln(1/q)} -  \frac{1.01e_1}{\ln(1/q)}.
\end{eqnarray*}
Thus we have established
\begin{equation} \label{eqn:main-bound}
\rho \geq \frac{1}{c} - e, \qquad \text{where } e = \frac{\Delta}{c} + \frac{1.01e_1}{\ln(1/q)} + \frac{e_2}{q\ln(1/q)}.
\end{equation}

We now estimate $e_1$ and $e_2$ in terms of $\Delta$ (and $\eps$), after which we will choose $\Delta$ so as to minimize $e$.  By definition, $e_1$ is the probability that a Binomial$(d, \eta_1)$ random variable exceeds $(\eps/c)d$, where $\eta_1 = (1-\exp(t/c'))/2$.  Let us select $\delta_1$ so that $(1+\delta_1)\eta_1  = \eps/c$. Thus
\[
\delta_1 = \frac{\eps}{c \eta_1} - 1 = \frac{2\eps/c}{1 - \exp(-t/c')} - 1 \geq \frac{2\eps/c}{t/c'} - 1 = \frac{1+\Delta}{1+\Delta/2} - 1 \geq .498\Delta.
\]
Here we used the definitions of $t$ and $c'$, and then the assumption $\Delta < .005$.  Using a standard Chernoff bound, we conclude
\begin{equation} \label{eqn:e1}
e_1 = \Pr[\text{Binomial}(d,\eta_1) > (1+\delta_1) \eta_1 d] < \exp\left(-\frac{\delta_1^2}{2+\delta_1} \eta_1 d\right) < \exp\left(-\frac{\Delta^2}{8.08} \eta_1 d\right),
\end{equation}
using the fact that $\delta^2/(2+\delta)$ is increasing in $\delta$, and $\Delta < .005$ again. We additionally estimate
\[
\eta_1 = \frac{1-\exp(t/c')}{2} \geq \frac{t/c' - (t/c')^2/2}{2} = (t/2c') - (t/2c')^2 \geq .99(t/2c') = .99 \frac{\eps}{c}\left(\frac{1+\Delta/2}{1+\Delta}\right) \geq .98\frac{\eps}{c}.
\]
Here the second inequality used $t/2c' \leq .01$, which certainly holds since $t/2c' \leq \eps = .005 \Delta$. The third inequality used $\Delta \leq .005$.  Substituting this into~\eqref{eqn:e1} we obtain our upper bound for $e_1$,
\begin{equation} \label{eqn:e1f}
e_1 < \exp\left(-\frac{\Delta^2}{8.25} \frac{\eps}{c} d\right) = \exp\left(-\frac{.005\Delta^3}{8.25 c} d\right) < \exp\left(-\frac{\Delta^3}{2000 c} d\right).
\end{equation}

Our estimation of $e_2$ is quite similar:
\begin{equation} \label{eqn:e2}
e_2 = \Pr[\text{Binomial}(d,\eta_2) < (1-\delta_2) \eta_2 d] < \exp\left(-\frac{\delta_2^2}{2} \eta_2 d\right),
\end{equation}
where $\eta_2 = (1-\exp(-t))/2$ and $\delta_2$ is chosen so that $(1-\delta_2) \eta_2 = \eps$.  This entails
\[
\delta_2 = 1 - \frac{\eps}{\eta_2} = 1 - \frac{2\eps}{1-\exp(-t)} \geq 1 - \frac{2\eps}{t - t^2/2} = 1-\frac{1}{(t/2\eps) - \eps(t/2\eps)^2} = 1-\frac{1}{(1+\Delta/2) - \eps(1+\Delta/2)^2}.
\]
This expression is the reason we were forced to take $\eps$ noticeably smaller than $\Delta$.  Using our specific setting $\eps = .005\Delta$, we conclude
\[
\delta_2 \geq 1-\frac{1}{(1+\Delta/2) - \eps(1+\Delta/2)^2} = 1 - \frac{1}{1 + .495 \Delta -.005 \Delta^2 - .00125 \Delta^3} \geq .49\Delta,
\]
where we used $\Delta \leq .005$ again.  As for $\eta_2$, we can lower bound it similarly to $\eta_1$, obtaining
\[
\eta_2 \geq .99(t/2) = .99 \eps(1 + \Delta/2) \geq .99 \eps.
\]
Substituting our lower bounds for $\delta_2$ and $\eta_2$ into~\eqref{eqn:e2} yields
\begin{equation} \label{eqn:e2f}
e_2 < \exp\left(-\frac{(.49\Delta)^2}{2} \cdot .99\eps d\right) < \exp\left(-\frac{\Delta^3}{2000} d\right).
\end{equation}

Plugging our upper bounds~\eqref{eqn:e1f},~\eqref{eqn:e2f} for $e_1$, $e_2$ into~\eqref{eqn:main-bound} gives
\begin{equation} \label{eqn:last}
e = \frac{\Delta}{c} + \frac{1.01\exp(-\frac{\Delta^3}{2000c} d)}{\ln(1/q)} + \frac{\exp(-\frac{\Delta^3}{2000} d)}{q\ln(1/q)}.
\end{equation}
Finally, we would like to choose
\[
\Delta = K_1 c^{1/3} \lambda(d,q)^{1/3},
\]
where $K_1$ is an absolute constant.  For $K_1$ sufficiently large, this makes all three terms in the bound~\eqref{eqn:last} at most
\[
2K_1 \lambda(d,q)^{1/3} =  \wt{O}\left(\frac{\ln(2/q)}{d}\right)^{1/3}.
\]
This would establish the theorem.

It only remains to check whether this is a valid choice for $\Delta$.  First, we note that with this choice, assumptions~\eqref{eqn:assn1} and~\eqref{eqn:assn2} follow from~\eqref{eqn:e1f} and~\eqref{eqn:e2f} (and increasing $K_1$ if necessary).  Second, we required that $\Delta \leq .005$.  This may not hold.  However, if it fails then we have
\[
\lambda(d,q)^{1/3} > \frac{.005}{K_1 c^{1/3}}.
\]
We can then trivialize the theorem by taking $K = (K_1/.005)^3$,  making the claimed lower bound for $\rho$ smaller than $1/c - 1/c^{1/3} \leq 0$.
\end{proof}

\section*{Acknowledgments}
The authors would like to thank Alexandr Andoni, Piotr Indyk, Assaf Naor, and Kunal Talwar for helpful discussions.

\bibliographystyle{alpha}

\bibliography{lsh}

\end{document}